\newtheorem{definition}{Definition}
\newtheorem{proposition}{Proposition}
\newtheorem{remark}{Remark}
\newtheorem{theorem}{Theorem}
\newtheorem{proof}{Proof}
\begin{document}

\title{Analysis and Control of Perturbed Density Systems}

\author{\textbf{Igor B. Furtat}
\\
\\
Institute for Problems in Mechanical Engineering
\\
of the Russian Academy of Sciences (IPME RAS), 
\\
199178, Saint-Petersburg, Boljshoy prospekt V.O., 61, Russia 
\\
e-mail: cainenash@mail.ru}

\maketitle

\begin{abstract}
The paper investigates dynamical systems for which the derivative of some positive-definite function along the solutions of this system depends on so-called density function. 
In turn, such dynamical systems are called density systems. 
The density function sets the density of the space, where the system is evolved, and affects the behaviour of the original system. 
For example, this function can define (un)stable regions and forbidden regions where there are no system solutions.
The density function can be used in the design of new adaptive control laws with the formulation of appropriate new control goals, e.g., stabilization in given bounded or semi-bounded sets. 
To design a novel adaptive control law that ensures the system outputs in a given set, systems with known and unknown parameters under disturbances are considered. 
All theoretical results and conclusions are illustrated by numerical simulations.
\end{abstract}

Keywords: dynamical system, positive-definite function, stability, instability, control, adaptive control.

\section{Introduction}

The paper describes dynamical systems whose behaviour depends on a so-called density function. 
Such dynamical systems are called density systems.
The density function determines the density of the space, where the system is evolved, affects the behaviour of phase trajectories and transients. 
Definitions and properties of density function and density systems are given in the main part of the paper.

A particular class of the density systems is first considered in \cite{Krasnoselski63}, where to study the stability of $\dot{x}=f(x)$ by using Lyapunov function a new system 
$\dot{x}=\rho(x)f(x)$ is considered with the auxiliary function $\rho(x)>0$. 
This idea is then used in \cite{Zhukov79,Zhukov99} to study stability and instability using the properties of divergence and flow of the phase vector. 
The stability study problem is considered in \cite{Zhukov99} for second order systems. 
The convergence of almost all trajectories of arbitrary order systems is proposed in \cite{Rantzer01}. 
Differently from \cite{Krasnoselski63,Zhukov79,Zhukov99}, the function $\rho(x)>0$ is called the density function in \cite{Rantzer01}.
The stability and instability of arbitrary order systems is proposed in \cite{Furtat20a,Furtat20b,Furtat21,Furtat22}. 
It is shown in \cite{Furtat20a,Furtat20b,Furtat21,Furtat22} the relationship between the obtained stability criteria, depending on the density function, and the continuity equation in electromagnetism \cite{Griffiths17}, fluid dynamics \cite{Pedlosky79,Potter08}, energy and heat \cite{Arnold14}, and quantum mechanics \cite{McMahon13}. 
New methods of the control law design are also proposed in \cite{Rantzer01,Furtat20a,Furtat20b,Furtat21,Furtat22}, using the density function and some properties of the divergence of the phase vector.

Another special class of density systems is considered in control law design by using the second Lyapunov method, where different functions are used in order to keep system outputs in given sets. 
These functions are regarded in our paper as density functions.
The papers \cite{Liberzon13,Berger18,Hu22} proposes the control laws ensuring the outputs in given funnels. 
In \cite{Bechlioulis14,Bikas22} the prescribed performance control laws guarantee the finding of transients in a set converging to a neighbourhood of zero. 
The papers \cite{Orlov22a,Orlov22b,Su23} consider the stabilization of trajectories in a prescribed-time using an infinitely increasing coefficient in the control law.
The approaches \cite{Tee09,Tian22} are based on the use of barrier or penalty functions to keep the system outputs within given constraints.
The control laws \cite{Furtat21b,Furtat21c} allow to guarantee the location of the outputs in the set, which can be asymmetric with respect to the equilibrium position and does not converge to a constant. 

In the above mentioned results, the density functions are positive, continuously differentiable, and multiplied by the entire right hand side of the system under consideration. 
In \cite{Furtat23,Furtat23ait,Furtat24cdc}, the systems for which the derivative of a positive definite function along the solutions of these systems is multiplied by the density function, but not the original systems. 
Thus, the density function can be represented explicitly or implicitly on the right hand side of the system equation, which extends the class of investigated systems. 
Also, sign-indefinite and nonsmoth density functions can be considered.
Using these functions, the density of the space is determined in sense of the selection of (un)stable regions, forbidden regions (where the system has no solutions), and the regions with different values of the density that affects on the behaviour of the studied systems.

However, in \cite{Furtat23,Furtat23ait}, systems are considered without external disturbances. 
The presence of bounded disturbances can make the closed-loop system unstable in \cite{Furtat23,Furtat23ait}. 
In \cite{Furtat24cdc}, the preliminary results are given under bounded disturbances. 
The present paper expands the results \cite{Furtat24cdc} to systems with disturbances and with the presentation of new comparative results. 
Also, a new adaptive control law is obtained on the basis of the density function for systems with disturbances.

The contribution of this paper is as follows:
\begin{enumerate}
\item[(i)] differently from \cite{Krasnoselski63,Zhukov79,Zhukov99,Rantzer01,Furtat20a,Furtat20b,Furtat21,Furtat22}, we consider the systems where the density function is not necessarily represented in the entire right hand side of the system equation;
\item[(ii)] in contrast to \cite{Liberzon13,Berger18,Hu22,Orlov22a,Orlov22b,Su23,Furtat21b,Furtat21c}, the density function can be represented implicitly on the right hand side of the system equation;
\item[(iii)] differently from \cite{Liberzon13,Berger18,Hu22,Tee09,Tian22,Furtat21b,Furtat21c}, the density function can guarantee solutions in an unbounded set with forbidden domains, and the boundaries of these sets can be piecewise continuous in time;
\item[(iv)] unlike \cite{Furtat23,Furtat23ait} a system analysis and control law design under disturbances (perturbations) are considered;
\item[(v)] compared to \cite{Furtat24cdc} a new adaptive control law is obtained on the basis of the density function for systems with disturbances. Differently from \cite{Furtat23}, the improved adaptive control law allows to be stabilize systems not necessarily in the vicinity of zero. 
\end{enumerate}

The paper is organised as follows.
Section \ref{Sec2} describes definitions of the density function and density systems with bounded disturbances.
Some properties of these systems are proved. 
Section \ref{Sec3} is dedicated to the use of the density function to design the adaptive control law for systems with unknown parameters and disturbances. 
Section \ref{Sec6} summarises the main conclusions. 
All sections include simulations to illustrate the theoretical results.

The following \textit{notations} are used in the paper:
$\mathbb R^{n}$ is an $n$-dimensional Euclidean space with norm $|\cdot|$;
$\mathbb R_{+}$ ($\mathbb R_{-}$) is a set of positive (negative) real numbers.

\section{Density systems}
\label{Sec2}

Consider a dynamical system of the form
\begin{equation}
\label{eq1}
\begin{array}{l}
\dot{x}=f(x,t),
\end{array}
\end{equation}
where $t \geq 0$, $x \in D \subset \mathbb R^n$ is the state,
the function $f: D \times [0, +\infty) \to \mathbb R^{n}$ is continuous in $t$ and piecewise continuous in $x$ on $D \times [0, +\infty)$. 
The solutions of \eqref{eq1} and other systems with discontinuous right hand side are understood in the sense of Filippov \cite{Filippov88}.


In \cite{Krasnoselski63,Zhukov79,Zhukov99,Rantzer01,Furtat20a,Furtat20b,Furtat21,Furtat22}, the autonomous system $\dot{x}=\rho(x)f(x)$ with the density function $\rho(x)>0$ is considered for the stability analysis of the initial system $\dot{x}=f(x)$, where $\rho(x)>0$ and $f(x)$ are continuously differentiable. 
It is shown in \cite{Furtat20a,Furtat20b,Furtat21,Furtat22}, that $\rho(x)$ determines the phase space density and it affects the phase flow velocity.
If $\rho(x)>0$, then the presence of the density function does not qualitatively affect the equilibrium positions and their types, but quantitatively affects the phase portrait.
Similar conclusions are drawn in \cite{Furtat20b} for nonautonomous systems $\dot{x}=f(x,t)$ with density function $\rho(x,t)>0$ for all $x$ and $t$. 

Unlike  \cite{Krasnoselski63,Zhukov79,Zhukov99,Rantzer01,Furtat20a,Furtat20b,Furtat21,Furtat22} in the present section we consider sign-indefinite and non-differentiable density functions. 
Such density functions allow one to set the properties of the space in order to change the behaviour of the original system.
Using the density function in the next section, a novel control laws are proposed, where by changing the density of the space one can achieve the corresponding control goals for the investigated systems.

\begin{remark}
Note that positive density is a usual in various fields of science.
However, the negative density is also used in some branches of science, see for example \cite{Bondi57,Norbert15,Farnes18,Edward20} in cosmology (white holes and wormholes), in general relativity, in quantum mechanics, in Schroedinger equation, in the theory of vibrations and metamaterials, in the experiment to create negative effective mass by reducing the temperature of rubidium atoms to near absolute zero, etc. 
In control law design, a density function with variable sign can be virtually implemented in a controller.
\end{remark}

Introduce definitions of the density functions and density systems considered in the present paper.


\begin{definition}
\label{def1}
The system \eqref{eq1} is called density with the density function $\rho(x,t): D \times [0, +\infty) \to \mathbb R$,
if there exists a continuously differentiable function $V(x,t): D \times [0, +\infty) \to \mathbb R_{+}$ and at least one of the following conditions holds
\begin{itemize}
\item[(a)] $\dot{V} \leq \rho(x,t) W_1(x) \leq 0$ for any $t \geq 0$ and $x \in D_{S} \subset D$,
\item[(b)] $\dot{V} \geq \rho(x,t) W_2(x) \geq 0$ for any $t \geq 0$ and $x \in D_{U} \subset D$.
\end{itemize}
Hereinafter $\rho(x,t)$ is continuous in $t$ and piecewise continuous in $x$ on $D \times [0, +\infty)$, $W_1(x)$ and $W_2(x)$ are non-zero continuous functions in $D_{S}$ and $D_{U}$ respectively.
\end{definition}


\begin{definition}
\label{def1_0}
If the functions $W_1(x)$ and $W_2(x)$ in Definition \ref{def1} are continuous 
in $D_S$ and $D_U$ respectively, then the system \eqref{eq1} is said to be weak density.
\end{definition}


\begin{definition}
\label{def3}
If condition (b) in Definition \ref{def1} satisfies $\dot{V} \leq \rho(x,t) W_1(x) < 0$ 
in $D_S \times [0, +\infty)$ or $\dot{V} \geq \rho(x,t ) W_2(x)>0$ in $D_U \times [0, +\infty)$,
then the system \eqref{eq1} is called strictly density.
\end{definition}


\begin{definition}
\label{def2}
If $\dot{V} \leq \rho(x,t) W_1(x) \leq 0$ in $D_S \times [0, +\infty)$,
then the density function $\rho(x,t)$ and the domain $D_S$ are said to be stable.
If $\dot{V} \geq \rho(x,t) W_2(x) > 0$ in $D_U \times [0, +\infty)$,
then the density function $\rho(x,t)$ and the domain $D_U$ are said to be unstable.
\end{definition}


Let us start with a simple example to illustrate the proposed definitions.


\textit{Example 1.}
Consider the system
\begin{equation}
\label{eq_ex1_00}
\begin{array}{l} 
\dot{x}(t)=-x(t)+d(t),
\end{array}
\end{equation} 
where $x \in \mathbb R$ and $d(t)$ is an external bounded disturbance with $\bar{d}=\sup\{|d(t)|\}$. 
The system \eqref{eq_ex1_00} has equilibrium point $x=0$ under $d=0$. 
To analyse the stability of \eqref{eq_ex1_00} consider Lyapunov function
\begin{equation}
\label{eq_ex1_00_Lyap}
\begin{array}{l} 
V=0.5x^2.
\end{array}
\end{equation} 
Differentiating \eqref{eq_ex1_00_Lyap} along with \eqref{eq_ex1_00}, one gets 
$\dot{V} \leq -\mu x^2-(1-\mu)x^2+\bar{d}|x| \leq -\mu x^2 <0$ in the domain $D_S=\left\{x \in \mathbb R: |x| > \frac{\bar{d}}{1-\mu}\right\}$, $0<\mu<1$. 
Thus, the system \eqref{eq_ex1_00} is stable in $D_S$.

Together with the system \eqref{eq_ex1_00}, consider a new one in the form
\begin{equation}
\label{eq_ex1_0}
\begin{array}{l}
\dot{x}(t)=-\rho(x,t) x(t) + d(t),
\end{array}
\end{equation}
where $\rho(x,t): \mathbb R \times [0, +\infty) \to \mathbb R$ is a continuous function in $t$ and piecewise continuous in $x$. 
Obviously, the behaviour of the system \eqref{eq_ex1_0} depends on the properties of $\rho(x,t)$.
With the function $\rho(x,t)$ it is possible to set some properties and restrictions in the space $(x,t)$.
Thus, it is possible to influence the quality of the transients of the original system \eqref{eq_ex1_00} and to change its qualitatively.
Therefore, $\rho(x,t)$ is called the \textit{density function}.
Let us consider some cases of how this function can influence the behaviour of the system \eqref{eq_ex1_0}.


\textit{Case 1.1.} 
The density function $\rho(x,t)=\alpha >0$ holds the equilibrium $x=0$ under $d=0$ and takes the same positive value in $D_{+} = \mathbb R$. 
Thus,  $\rho(x,t)$ does not qualitatively affect the exponential stability of the initial system \eqref{eq_ex1_00}, excepting the rate of convergence of the solution of \eqref{eq_ex1_0} to a new domain $D_{S}=\left\{ x \in \mathbb R: |x| > \frac{\bar{d}}{\alpha (1-\mu)} \right\}$, which can be obtained by using Lyapunov function \eqref{eq_ex1_00_Lyap}. 
Thus, increasing the space density $\alpha$ one increases the rate of convergence of \eqref{eq_ex1_0} and decreases the value of ultimate bound, and vice versa, decreasing the space density $\alpha$ one decreases the rate of convergence and increases the value of ultimate bound. 
This fact is shown in the left hand side Fig. \ref{Fig1}, where $d(t)=0.5\textup{atan}(100\sin(0.8t))$, dotted lines are drawn for $\alpha=1$ (initial system \eqref{eq_ex1_00}) and solid lines are drawn for $\alpha=10$. 


\textit{Case 1.2.}  
The density function $\rho(x,t)=\frac{\alpha}{b(t)-|x(t)|}$ with a continuous function $b(t)>0$ preserves a unique equilibrium position $x=0$ under $d(t)=0$. 
We have $\rho(x,t)>0$ in $D_{+}=\{x \in \mathbb R: -b<x<b\}$. 
Also, $\rho(x,t) \to +\infty$ when $|x-b| \to 0$ in $D_+$.
Choosing the quadratic function \eqref{eq_ex1_00_Lyap}, we obtain 
$\dot{V} \leq -\frac{\alpha \mu}{b-|x|} x^2-\frac{\alpha(1-\mu)}{b-|x|} x^2+\bar{d}|x| \leq -\frac{\alpha \mu}{b-|x|} x^2<0$ in  
$x \in D_S = \left\{x \in \mathbb R: x \in \left(-b,-\frac{\bar{d}b}{\bar{d}+\alpha(1-\mu)} \right) \cup \left(\frac{\bar{d}b}{\bar{d}+\alpha(1-\mu)}, b\right) \right\}$. 
The right hand side Fig. \ref{Fig1} shows the transients for $\alpha=1$, $b(t)=2e^{-t}+0.1$, and $d(t)=0.5\textup{atan}(100\sin(0.8t))$. 

\begin{figure}[h!]
\begin{minipage}[h]{0.49\linewidth}
\center{\includegraphics[width=0.9\linewidth]{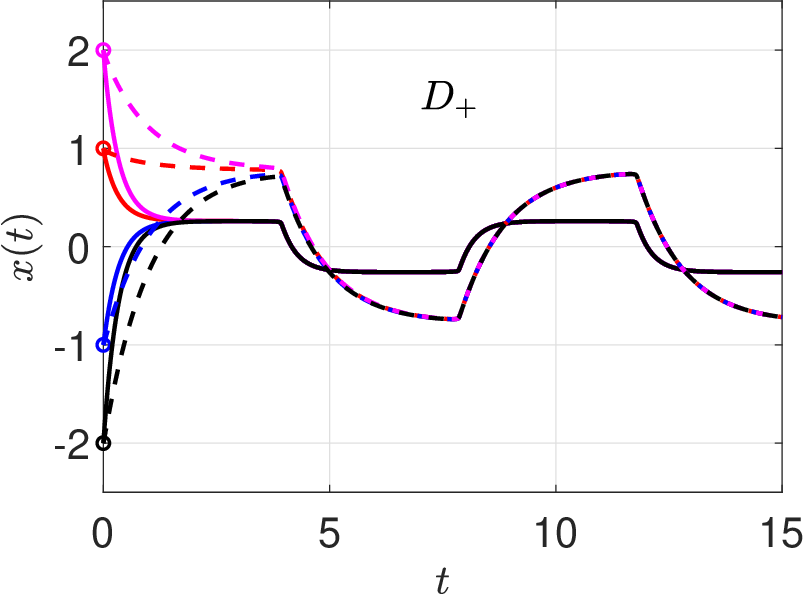}}
\end{minipage}
\hfill
\begin{minipage}[h]{0.49\linewidth}
\center{\includegraphics[width=0.9\linewidth]{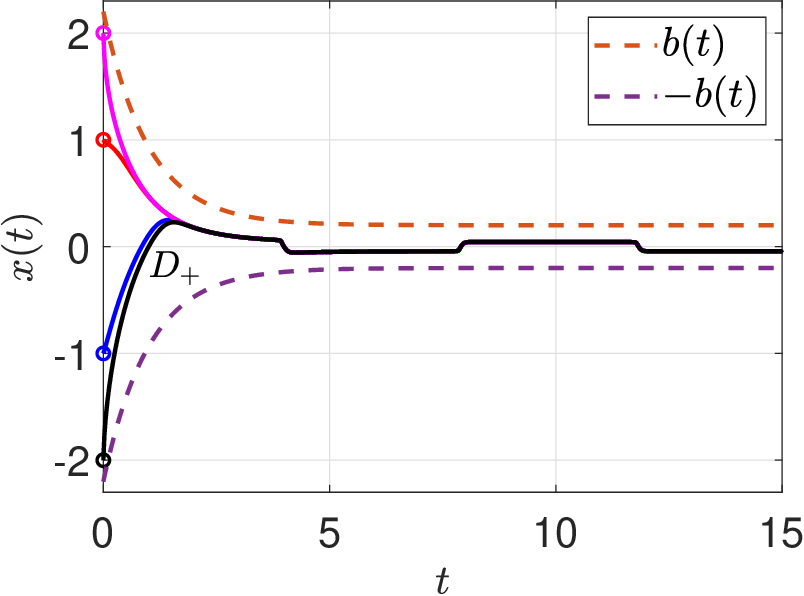}}
\end{minipage}
\caption{The transients for $\rho(x,t)=1$ (dotted lines) and $\rho(x,t)=10$ (solid lines) in \eqref{eq_ex1_0} (left), as well as 
for $\rho(x,t)=\frac{1}{2e^{-t}+0.1-|x(t)|}$ in \eqref{eq_ex1_0} (right) under $d(t)=0.5\textup{atan}(100\sin(0.8t))$.}
\label{Fig1}
\end{figure}


\textit{Case 1.3.}  
Consider the density function $\rho(x,t)=\alpha [x(t)-z(t)] \textup{sign}(x)$ with a continuous function $z(t)$. 
In this case the right hand side of \eqref{eq_ex1_0} is equal to $0$ under $d(t)=0$ when $x=0$ and $x(t)=z(t)$.
The function $\rho(x,t)$ takes a positive value in  
$D_{+}=\{x \in \mathbb R: x \in (-\infty;0) \cup (z;+\infty), ~ z>0\} \cup \{x \in \mathbb R: x \in (-\infty;z) \cup (0;+\infty), z<0\}$ and negative value in $D_{-}=\{x \in \mathbb R: x \in (0;w), z>0\} \cup \{x \in \mathbb R: x \in (z;0), z<0\}$.
Choosing the quadratic function \eqref{eq_ex1_00_Lyap}, we get 
$\dot{V}=-\alpha \mu [x-z] \textup{sign}\{x\}x^2 -\alpha (1-\mu) [x-z] \textup{sign}(x) x^2 +xd$. 
Find the upper estimate as $\dot{V} \leq -\alpha \mu [x-z] \textup{sign}(x)x^2< 0$ in 
$D_S=\Big\{x \in \mathbb R: x \in \left(-\infty;-\sqrt{\frac{\bar{d}}{(1-\mu)\alpha}}\right) \cup \left(z+\sqrt{\frac{\bar{d}}{(1-\mu)\alpha}};+\infty \right)\ , z>0\Big\} \cup \Big\{x \in \mathbb R: x \in \left(-\infty;z-\sqrt{\frac{\bar{d}}{(1-\mu)\alpha}} \right) \cup \left(\sqrt{\frac{\bar{d}}{(1-\mu)\alpha}};+\infty \right), z<0\Big\}$. 
The lower estimate can be written as 
$\dot{V} \geq - \alpha \mu [x-z] \textup{sign}\{x\}x^2 > 0$ in 
$D_{U} = \Big\{ x \in \mathbb R: x \in \left(\sqrt{\frac{\bar{d}}{(1-\mu)\alpha}};z \right), z>0 \Big\} \cup \Big \{x \in \mathbb R: x \in \left(z+\sqrt{\frac{\bar{d}}{(1-\mu)\alpha}};-\sqrt{\frac{\bar{d}}{(1-\mu)\alpha}} \right), z<0 \Big\}$. 
The left hand side Fig. \ref{Fig2} shows the transients for $\alpha=1$, $z(t)=2e^{-t}+0.1$, and $d(t)=0.5\textup{atan}(100\sin(0.8t))$. 


\textit{Case 1.4.}  
Consider the density function $\rho(x,t)= - \alpha \ln \frac{\overline{b}(t)-x(t)}{x(t)-\underline{b}(t) }$ with continuous functions 
$\overline{b}(t)>\underline{b}(t)>0$. 
The right hand side of \eqref{eq_ex1_0} is equal to $0$ when $x(t)=z(t)=0.5[\overline {b}(t)+\underline{b}(t)]$ under $d(t)=0$.
The function $\rho(x,t)$ takes a positive value in 
$D_{+}=\{x \in \mathbb R: z < x < \overline{b} \}$ and a negative value in  
$D_{-}=\{x \in \mathbb R: \underline{b}<x <z \}$.
The trajectories started in $D_{+} \cup D_{-}$ never leave it because $|\rho(x,t)| \to +\infty$ as $x$ approaches $\underline{b}$ and $\overline{b}$.
Choosing the quadratic function \eqref{eq_ex1_00_Lyap}, we obtain $\dot{V}=\alpha \mu \ln \frac{\overline{b}-x}{x-\underline{b}} x^2 + \alpha (1-\mu) \ln \frac{\overline{b}-x}{x-\underline{b}} x^2 + xd$. 
Estimating 
$\dot{V} \leq \alpha \ln \frac{\overline{b}-x}{x-\underline{b}} \mu x^2 -\alpha |x| (1-\mu) \left(\frac{4}{\overline{b}-\underline{b}}(x-z)x - \frac{\bar{d}}{(1-\mu)\alpha} \right) \leq \alpha \ln \frac{\overline{b}-x}{x-\underline{b}} \mu x^2 < 0$, 
one concludes that the system \eqref{eq_ex1_0} is stable in 
$D_S=\left\{x \in \mathbb R: \sqrt{\frac{(\overline{b}-\underline{b})\bar{d}}{4(1-\mu)\alpha}} + z < x < \overline{b} \right\}$.
Estimating $\dot{V} \geq \alpha \ln \frac{\overline{b}-x}{x-\underline{b}} \mu x^2 + \alpha |x| (1-\mu) \left(\frac{4}{\overline{b}-\underline{b}}(x-z)x - \frac{\bar{d}}{(1-\mu)\alpha} \right) \geq \alpha \ln \frac{\overline{b}-x}{x-\underline{b}} \mu x^2 > 0$, 
one gets that the system \eqref{eq_ex1_0} is unstable in 
$D_U=\left\{x \in \mathbb R: \underline{w}<x <\sqrt{\frac{(\overline{b}-\underline{b})\bar{d}}{4(1-\mu)\alpha}}+z \right\}$. 
Fig. \ref{Fig2} (right) shows the transients for $\alpha=1$, $\underline{b}(t)=e^{-0.1t}(2+\sin(t))+0.2$, $\overline{b}(t)=e^{-0.1t}(2+\sin(t))+0.3$, and $d(t)=0.5\textup{atan}(100\sin(0.8t))$.

\begin{figure}[h!]
\begin{minipage}[h]{0.47\linewidth}
\center{\includegraphics[width=0.9\linewidth]{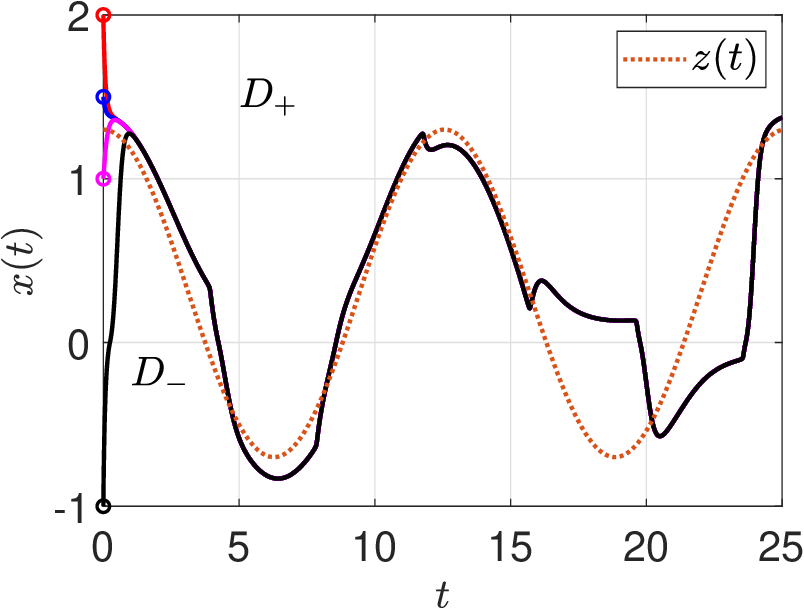}}
\end{minipage}
\hfill
\begin{minipage}[h]{0.47\linewidth}
\center{\includegraphics[width=0.9\linewidth]{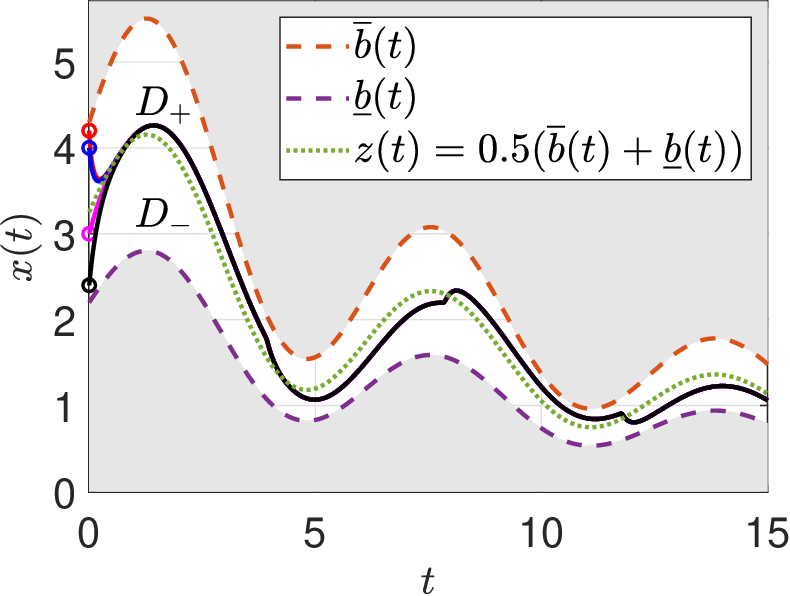}}
\end{minipage}
\caption{The transients for $\rho(x,t)=[x-0.3-\cos(0.5t)] \textup{sign}(x)$ (left) and $\rho (x,t)=\ln \frac{2e^{-0.1t}(2+\sin(t))+0.3-x}{x-e^{-0.1t}(2+\sin(t))-0.2}$ (right) under $d(t)=0.5\textup{atan}(100\sin(0.8t))$.}
\label{Fig2}
\end{figure}

\begin{remark}
One can also consider \eqref{eq1} with piecewise continuous right hand side in $t$. 
However, this will complicate the stability analysis in the sense of the choice of initial conditions \cite{Filippov88}.
Let us demonstrate this in example 1, case 2.1, where consider piecewise continuous functions $w(t)$. 
It can happen that solutions started in the region $D_S$ will leave this region. 
For example, let $w(t) = 2$ for $t \in [0, t_0]$ and $w(t) = 1$ for $t > t_0$. 
If $t_0$ is small enough and the initial condition $x(0)$ is close to $2$ or $-2$, then at $t \to t_0$ the trajectory $x(t)$ will ``hit the wall'', formed by the function $w(t)$, and jump out of $D_S$.
The existence theorem guarantees that the solution $x(t)$ can be extended further, but stability analysis is no longer applicable in this case \cite{Filippov88}. 
This solution will jump out of the $D_S$ region and begin to grow indefinitely. 

Since the present paper is devoted to the study of density systems, for the sake of simplicity and without loss of generality, we consider continuous in $t$ systems in order to simplify the stability analysis and to avoid finding the attraction set.
\end{remark}


Example 1 shows how the density function $\rho(x,t)$ given in the space $(x,t)$ can qualitatively influence the transients of the original system \eqref{eq_ex1_00}. 
Now, in contrast to Example 1 and \cite{Furtat20a,Furtat20b,Furtat21,Furtat22} we will show that the density function can be represented explicitly or implicitly on the right hand side of the system.


\textit{Example 2.}
Consider the system
\begin{equation}
\label{eq_ex2_1}
\begin{array}{l}
\dot{x}_1=x_2-\rho_1(x,t) x_1+d_1,
\\
\dot{x}_2=-x_1-\rho_2(x,t) x_2+d_2,
\end{array}
\end{equation}
where $\rho_1(x,t)$ and $\rho_2(x,t)$ are continuous functions in $t$ and piecewise continuous in $x$ in $\mathbb R^2 \times [0, + \infty)$, 
$d_1(t)$ and $d_2(t)$ are bounded functions with $|d_1(t)|<\bar{d}_1$ and $|d_2(t)|<\bar{d}_2$.
Consider the quadratic function
\begin{equation}
\label{eq_ex2_2}
\begin{array}{l}
V=0.5(x_1^2+x_2^2).
\end{array}
\end{equation}
Taking the derivative of \eqref{eq_ex2_2} along the solutions of \eqref{eq_ex2_1}, one gets
\begin{equation}
\label{eq_ex2_3}
\begin{array}{l}
\dot{V} = - \rho_1 x_1^2 - \rho_2 x_2^2+x_1d_1+x_2d_2.
\end{array}
\end{equation}


\textit{Case 2.1.} 
Let $\rho_1(x,t)=\rho_2(x,t)=\rho(x,t)$. 
Estimate \eqref{eq_ex2_2} as follows
\begin{equation}
\label{eq_ex2_3_est}
\begin{array}{lll}
\dot{V} \leq &  -\rho \mu (x_1^2+x_2^2) - (1-\mu)|x_1| (\rho |x_1| -\frac{\bar{d}}{1-\mu}) 
\\
&- (1-\mu)|x_2| (\rho |x_2| -\frac{\bar{d}}{1-\mu}),
\\
\dot{V} \geq & -\rho \mu (x_1^2+x_2^2) + (1-\mu)|x_1| (\rho |x_1| -\frac{\bar{d}}{1-\mu}) 
\\
&+ (1-\mu)|x_2| (\rho |x_2| -\frac{\bar{d}}{1-\mu}).
\end{array}
\end{equation}

\textit{Case 2.1.1.} Consider $\rho(x,t)=\ln \frac{\overline{b}-|x_1|^{\beta}-|x_2|^{\beta}}{|x_1|^{\beta}+| x_2|^{\beta}-\underline{b}}$, where $\beta>0$, $\overline{b}(t)>\underline{b}(t)>0$ are continuous functions. 
The density function $\rho(x,t)>0$ in $D_{+}=\{x \in \mathbb R^2: \underline{b}<|x_1|^{\beta}+|x_2|^{\beta}<\overline{b}\}$ 
and 
$\rho(x,t)<0$ in $D_{-}=\{x \in \mathbb R^2: \overline{b}<|x_1|^{\beta}+|x_2|^{\beta}< \underline{b}\}$.
Thus, we have $\dot{V} < 0$
in
$D_S=\left\{ x \in \mathbb D_{+}, t \geq 0: \rho|x_1|>\frac{\bar{d}_1}{1-\mu}~\mbox{and}~\rho|x_2|>\frac{\bar{d}_2}{1-\mu} \right\}$ 
and $\dot{V} > 0$ in 
$D_U=\left\{ x \in \mathbb D_{-}: \rho(x,t)|x_1|>\frac{\bar{d}_1}{1-\mu}~\mbox{and}~\rho(x,t)|x_2|>\frac{\bar{d}_2}{1-\mu} \right\}$.

In this case, the density function $\rho(x,t)$ is explicitly represented in the system \eqref{eq_ex2_1},
but it is not multiplied by the whole right hand side, as in example 1 and in \cite{Furtat20a,Furtat20b,Furtat21,Furtat22}.
Fig.~\ref{Fig4} shows phase trajectories for $\beta=1$, $\overline{g}=3$, $\underline{b}=2$ (left) and $\beta=0.6$, $\overline{b}=3^{0.6}$, $\underline{b}=1$ (right) with $x(0)=col\{0,2.5\}$.

\textit{Case 2.1.2.} 
Let the density function be represented by the constant $\rho(x,t)=1$ and the logarithmic function $\rho(x,t)=\sum_{i=1}^4\ln (|x_1-x_{1i}|^{\beta_i}+|x_2-x_{2i}|^{\beta_i}-b_i)$, $\beta_1=0.5$, $\beta_2=1$, $\beta_3=2$, and $\beta_4=4$. 
The first density function $\rho(x,t)>0$ in $D_{+}=\mathbb R^2$. 
The second density function $\rho(x,t)>0$ in $D_{+}=\{x \in \mathbb R^2: |x_1-x_{1i}|^{\beta_i}+|x_2-x_{2i}|^{\beta_i}-b_i>1, i=1,...,4\}$ 
and 
$\rho(x,t)<0$ in $D_{-}=\{x \in \mathbb R^2: 0<|x_1-x_{1i}|^{\beta_i}+|x_2-x_{2i}|^{\beta_i}-b_i<1, i=1,...,4\}$. 
Thus, we have the similar expressions for $D_S$ and $D_U$ as in \textit{case 2.1.1}.

Fig.~\ref{Fig4a} shows the phase trajectories of \eqref{eq_ex2_1} with constant density ($\rho_1(x,t)=\rho_2(x,t)=1$, dashed curves) and logarithmic density (solid line). 
It can be seen that the logarithmic density, unlike the constant one, sets restrictions in the phase space. 
Therefore, the trajectories of the system \eqref{eq_ex2_1} with constant density tend to the neighbourhood of zero through forbidden (grey) region, while the trajectories of the system \eqref{eq_ex2_1} with the logarithmic density go around the forbidden (grey) regions to the neighbourhood of zero.


Here and below:
\begin{itemize}
\item grey areas in the figures mean that the density function is chosen such that there are no solutions of the system in these areas (on the boundary of this area, the density value increases to infinity);
\item the dotted curve indicates the equilibrium position and, accordingly, the boundary between the stable $D_S$ and the unstable $D_U$ regions.
\end{itemize}




\textit{Case 2.2.} 
Let $\rho_1=\alpha \ln(|x_1|^{\beta}+|x_2|^{\beta}-1)$, $\alpha>0$, $\beta>0$, $\rho_2=0$ and $d_2=0$.
Then $\rho(x,t)=\alpha \ln(|x_1|^{\beta}+|x_2|^{\beta}-1)>0$ in $D_{+}=\{x \in \mathbb R ^2:|x_1|^{\beta}+|x_2|^{\beta}>1\}$ and
$\rho(x,t)<0$ in $D_{-}=\{x \in \mathbb R^2: 0<|x_1|^{\beta}+|x_2|^{\beta}<1\}$. 
One has $\dot{V} = -\rho(x,t) \mu x_1^2 - \rho(x,t) (1-\mu) x_1^2 + x_1 d_1$. 
Then, $\dot{V} \leq -\rho(x,t) \mu x_1^2 <0$ in $D_{S}=\{x \in D_{+}, t \geq 0: \rho |x_1|>\frac{\bar{d}}{1-\mu}\}$ and
$\dot{V} \geq -\rho(x,t) \mu x_1^2 >0$ in $D_{S}=\{x \in D_{-}, t \geq 0: \rho |x_1|>\frac{\bar{d}}{1-\mu}\}$. 
Fig.~\ref{Fig6} shows the phase trajectories for $\alpha=5$, $\beta=1$ (left) and $\alpha=5$, $\beta=0.5$ (right).
Unlike \textit{case 2.1} in \textit{case 2.2} the density function $\rho(x,t)$ is only represented in one equation in \eqref{eq_ex2_1}.

\begin{figure}[h]
\begin{minipage}[h]{0.49\linewidth}
\center{\includegraphics[width=0.9\linewidth]{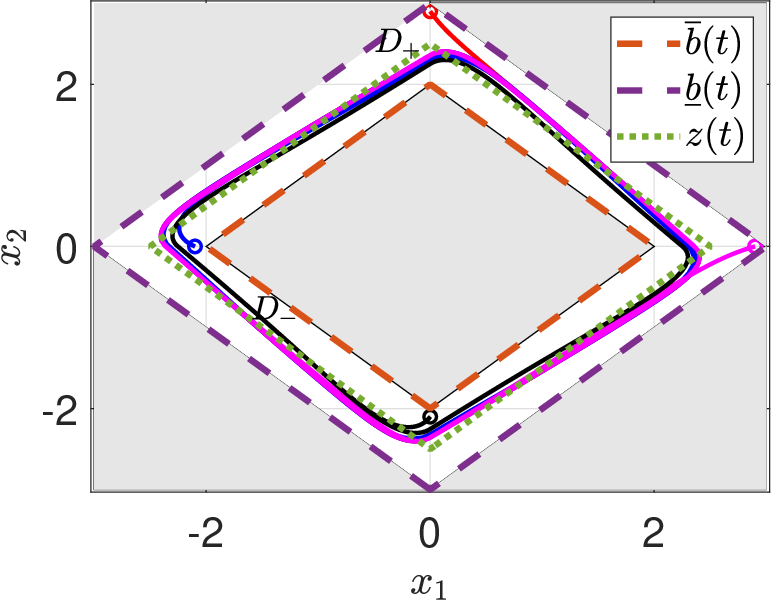}}
\end{minipage}
\hfill
\begin{minipage}[h]{0.49\linewidth}
\center{\includegraphics[width=0.9\linewidth]{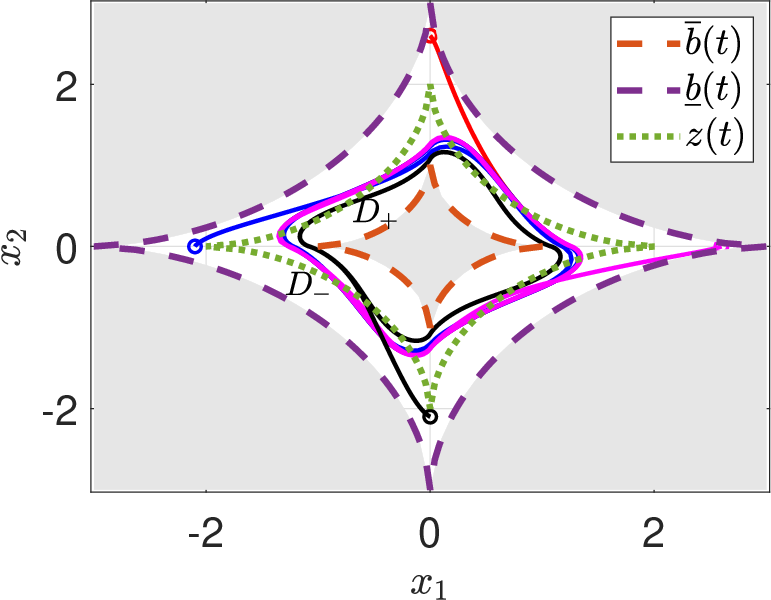}}
\end{minipage}
\caption{The phase trajectory of \eqref{eq_ex2_1} for $\rho(x,t)=\ln \frac{3-|x_1|-|x_2|}{|x_1|+|x_2|-2}$ (left) and $\rho(x,t)=\ln \frac{3^{0.6}-|x_1|^{0.6}-|x_2|^{0.6}}{|x_1|^{0.6}+| x_2|^{0.6}-1}$ (right).}
\label{Fig4}
\end{figure}

\begin{figure}[h]
\center{\includegraphics[width=0.5\linewidth]{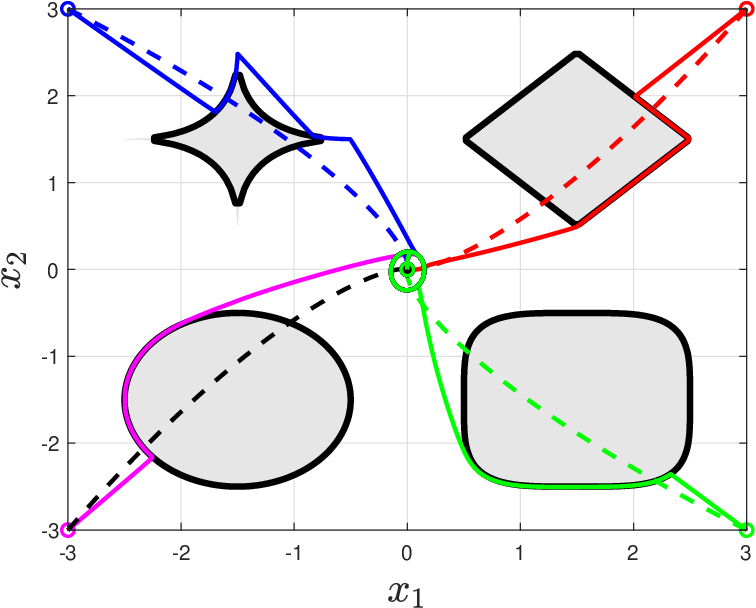}} 
\caption{The phase trajectories of \eqref{eq_ex2_1} for $\rho(x,t)=1$ (dotted curve) and $\rho(x,t)=\sum_{i=1}^4\ln (|x_1-x_{1i}|^{\beta_i}+|x_2-x_{2i}|^{\beta_i}-b_i)$ (solid curve).}
\label{Fig4a}
\end{figure}


\begin{figure}[h]
\begin{minipage}[h]{0.49\linewidth}
\center{\includegraphics[width=0.9\linewidth]{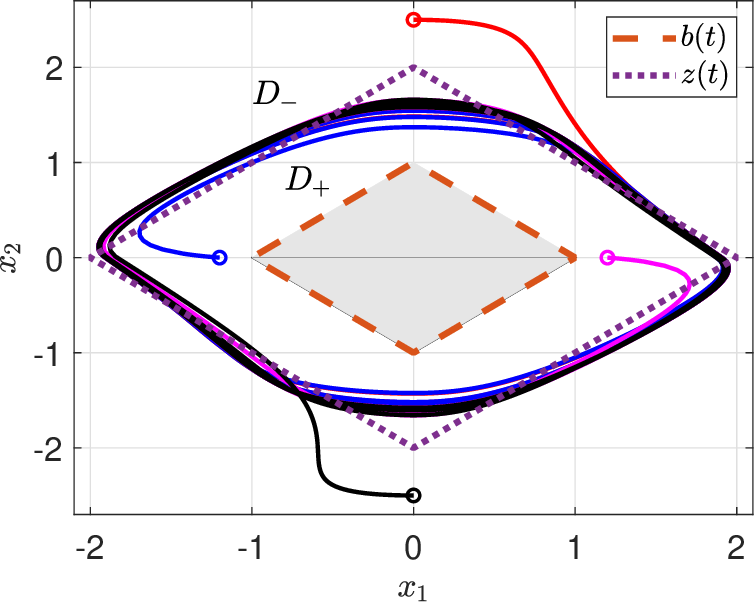}}
\end{minipage}
\hfill
\begin{minipage}[h]{0.49\linewidth}
\center{\includegraphics[width=0.9\linewidth]{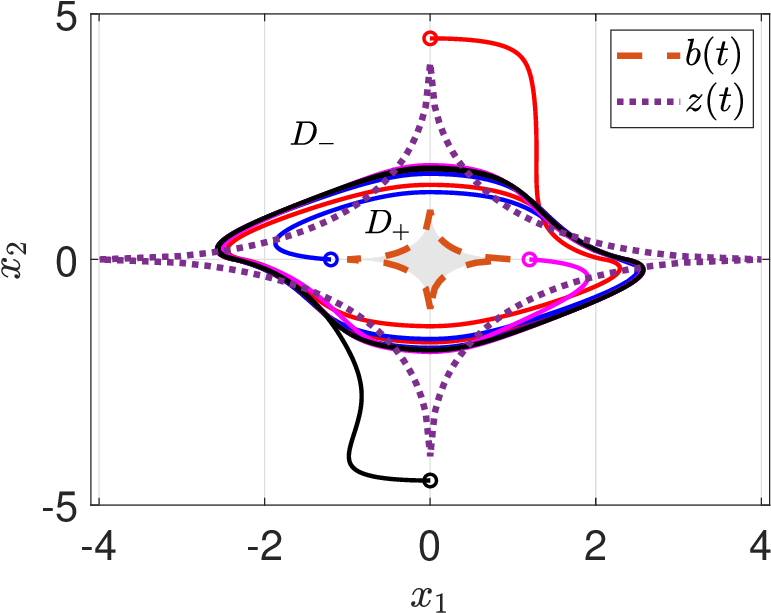}}
\end{minipage}
\caption{The phase trajectories of \eqref{eq_ex2_1} for $\rho(x,t)=5 \ln(|x_1|+|x_2|-1)$ (left) and $\rho(x,t)=5 \ln(|x_1|^{0.5}+|x_2|^{0.5}-1)$ (right).}
\label{Fig6}
\end{figure}


The examples have shown that there may be certain regions where the trajectories of the system tend to the area between these regions. 
We formulate this in the next proposition.


\begin{proposition}
\label{Th1}
Let the system \eqref{eq1} be strictly density in the domains $D_S$ and $D_U$.
If the condition $V(x_{s},t)-V(x_{u},t)>0$ is satisfied for any $t$,
where $x_{s} \in D_S$ and $x_{u} \in D_U$, then the trajectories are attracted to the region between $D_S$ and $D_U$.
If the system \eqref{eq1} satisfies the condition $V(x_{s},t)-V(x_{u},t)<0$ for any $t$,
then the trajectories of the system move away from the region between $D_S$ and $D_U$.
\end{proposition}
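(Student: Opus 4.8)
\emph{Proof idea.} The plan is to follow the scalar ``altitude'' $v(t):=V(x(t),t)$ along an arbitrary Filippov solution $x(\cdot)$ of \eqref{eq1}. Since $V$ is continuously differentiable and $x(\cdot)$ is absolutely continuous, $v(\cdot)$ is absolutely continuous with $\dot v(t)=\dot V(x(t),t)$ for almost every $t$, so strict density in both $D_S$ and $D_U$ (Definition \ref{def3}) gives $\dot v(t)<0$ whenever $x(t)\in D_S$ and $\dot v(t)>0$ whenever $x(t)\in D_U$. Hence $v$ strictly decreases along every sub-arc of the trajectory lying in $D_S$ and strictly increases along every sub-arc lying in $D_U$; note also that $V(x_s,t)>V(x_u,t)$ for all $x_s\in D_S$, $x_u\in D_U$ forces $D_S\cap D_U=\emptyset$, so ``the region between'' is genuinely a gap.

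First I would make ``the region between $D_S$ and $D_U$'' precise through level sets of $V$: put $\alpha(t):=\sup_{x\in D_U}V(x,t)$ and $\beta(t):=\inf_{x\in D_S}V(x,t)$, so that the standing hypothesis $V(x_s,t)-V(x_u,t)>0$ is exactly $\alpha(t)\leq\beta(t)$ for every $t$, and the transition layer is $M:=\{(x,t):\alpha(t)\leq V(x,t)\leq\beta(t)\}$, which separates $D_S$ (on which $V(x,t)\geq\beta(t)$) from $D_U$ (on which $V(x,t)\leq\alpha(t)$). Then I would argue that a sub-arc in $D_S$, along which $v$ strictly decreases while staying $\geq\beta(t)$, is pushed toward the face $\{V=\beta(t)\}$ of $D_S$, i.e.\ toward $M$; symmetrically a sub-arc in $D_U$ forces $v$ upward toward $\{V=\alpha(t)\}$, again toward $M$. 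Concatenating sub-arcs, $v(t)$ is driven into the band $[\alpha(t),\beta(t)]$, and whenever $v(t)$ lies in that band the point $x(t)$ belongs to $M$ by construction; a LaSalle-type $\omega$-limit argument then shows the trajectory cannot persistently re-escape $M$. The second claim is the mirror image: $V(x_s,t)-V(x_u,t)<0$ means $\sup_{x\in D_S}V(x,t)\leq\inf_{x\in D_U}V(x,t)$, so the descent of $v$ in $D_S$ carries $x(t)$ below the common band while the ascent of $v$ in $D_U$ carries it above, i.e.\ away from the region between the two domains.

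The hard part will be the bookkeeping forced by the explicit time-dependence of $V$ (and of the sets $D_S$, $D_U$ in the applications): ``$v$ decreases'' does not by itself place $x(t)$ in a fixed set, and one must exclude the degenerate scenario where $\dot v\to 0$ while the trajectory lingers in $D_S$ without ever meeting $\partial D_S$. I expect this needs a mild uniformity hypothesis — e.g.\ that on compact subsets of $D_S$ (respectively $D_U$) the bound $\rho W_1<0$ (respectively $\rho W_2>0$) is uniform in $t$, together with enough regularity in $t$ of $\alpha(\cdot)$ and $\beta(\cdot)$ — after which a non-autonomous Barbalat/LaSalle argument closes the proof. A secondary point, which I would settle by an explicit statement preceding the proof, is simply fixing the level-set meaning of ``the region between $D_S$ and $D_U$'' used above.
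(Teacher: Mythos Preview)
Your core approach --- tracking $v(t)=V(x(t),t)$, using $\dot v<0$ on $D_S$ and $\dot v>0$ on $D_U$, and combining this with the level-set ordering $V|_{D_S}>V|_{D_U}$ to conclude attraction toward the separating region --- is exactly the argument the paper gives, only the paper's proof is far terser: it simply invokes Definition~\ref{def3} to get the two sign conditions and then states that the boundary is therefore attracting (respectively repelling). Your introduction of $\alpha(t)=\sup_{D_U}V$, $\beta(t)=\inf_{D_S}V$, the explicit transition layer $M$, and the LaSalle/Barbalat bookkeeping are genuine refinements rather than a different route; the paper does not carry out any of that analysis and instead offloads the caveats you raise (non-uniform $\rho W_i$, trajectories that merely hover near the boundary or drift away when the density degenerates) to Remark~\ref{Rem1} immediately after the proposition. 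In short, your proposal is a rigorous expansion of the paper's heuristic, and the ``hard parts'' you flag are precisely the phenomena the paper concedes rather than proves away.
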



\begin{proof}
Let the condition $V(x_{s},t)-V(x_{u},t)>0$ be satisfied for any $t \geq 0$, where $x_{s} \in D_S$ and $x_ {u} \in D_U$.
Since the system is strictly density, by Definition \ref{def3} the conditions $\dot{V} \leq \rho(x,t) W_1(x) < 0$ and $\dot{V} \geq \rho(x,t) W_2(x) > 0$ are satisfied in $D_S $ and $D_U$ respectively.
Hence, the boundary between the regions $D_S$ and $D_U$ is a set to which the trajectories of the system are attracted.

Let the condition $V(x_{s},t)-V(x_{u},t)<0$ be satisfied for each $t \geq 0$, where $x_{s} \in D_S$ and $x_{u } \in D_U$.
By Definition \ref{def3}, $\dot{V} \leq \rho(x,t) W_1(x) < 0$ is satisfied in $D_S$, and $\dot{V } \geq - \rho(x,t) W_2(x) > 0$ in $D_U$.
This means that the separation boundary of the regions $D_S$ and $D_U$ is the set that the trajectories of the system leave.
\end{proof}


\begin{remark}
\label{Rem1}
In Proposition \ref{Th1}, under the attraction of trajectories to a given set we can consider the cases where trajectories approaching a given set over time or finding trajectories in some neighbourhood of a given set.
Moreover, the size of this neighbourhood can remain constant or increase over time.
It depends on the density of the space.
Here are some extreme cases:
\begin{itemize}
\item if in the neighbourhood of the boundary between the regions $D_S$ and $D_U$ the density value decreases to zero, then the trajectories of the system do not approach this boundary.
They can be located both in the neighbourhood of a given boundary as well as moving away from it;
\item if in the neighbourhood of the boundary between the regions $D_S$ and $D_U$ the density neighbourhood increases indefinitely, then the trajectories of the system will approach this boundary.
\end{itemize}
\end{remark}

As we study the density systems, we will highlight special areas.
We have already considered them earlier as grey areas in the figures, now we will define them.

\begin{definition}
\label{def4}
If $\dot{V} \leq \rho(x,t) W_1(x) < 0$ in a neighbourhood of $D_{bh} \times [0, +\infty)$,
there are no solutions in $D_{bh} \times [0, +\infty)$ of \eqref{eq1}
and the value of the density function increases to infinity when approaching this area,
then the domain $D_{bh}$ is said to be absolutely stable.
\end{definition}

\begin{definition}
\label{def5}
If $\dot{V} \geq \rho(x,t) W_2(x) > 0$ in a neighbourhood of $D_{wh} \times [0, +\infty)$,
there are no solutions in $D_{wh} \times [0, +\infty)$ of \eqref{eq1}
and the value of the density function increases to infinity when approaching this area,
then the domain $D_{wh}$ is said to be absolutely unstable.
\end{definition}

\textit{Example 3.}
Consider the system \eqref{eq_ex2_1}, where $\rho_1(x,t)=\rho_2(x,t)=\rho(x)$.
Choose the quadratic function \eqref{eq_ex2_2}.
Then $\dot{V} = - \rho(x) (x_1^2+x_2^2)$.

\textit{Case 3.1.} 
Let $\rho(x)=e^{(x_1^2+x_2^2-1)^{-0.98}}$. 
The density function is positive in $D_{+}=\mathbb R^2 \setminus D_{bh}$, where $D_{bh}=\{x \in \mathbb R^2: x_1^2+x_2^2 \leq 1\}$. 
Furthermore, the density value increases when approaching the region $D_{bh}$. 
The Fig. ~\ref{Fig_bh} (left) shows the phase trajectories for different initial conditions and disturbances $d_1(t)=4 \sin(t)e^{-0.09t}$, $d_2(t)=4 \cos(t)e^{-0.1t}$. 
It can be seen that from some initial conditions the trajectories tend to the region $D_{bh}$ despite the presence of disturbances.
However, for other initial conditions, external disturbances moving away the trajectories from $D_{bh}$ until the influence of disturbances ends, 
after which these trajectories tend to the region $D_{bh}$. 
If there are no disturbances, then $D_{S}=D_{+}$. 

\textit{Case 3.2.} 
Now let us consider $\rho(x)=-\ln(x_1^2+x_2^2-1)$. 
Then the density function is positive in the domain $D_{+}=\{x \in \mathbb R^n: x_1^2+x_2^2 \geq \sqrt{2} \}$ and negative in the region $D_{+}=\{x \in \mathbb R^n: 1 \leq x_1^2+x_2^2 \leq \sqrt{2}\}$. 
Moreover, the density value increases when approaching from the boundary between $D_{+}$ and $D_{-}$ to the region $D_{bh}=\{x \in \mathbb R^2: x_1^2+x_2^2 \leq 1\}$. 
In Fig. ~\ref{Fig_bh} (right) one can see the phase trajectories under $d_1(t)=\sin(t)$ and $d_2(t)=0.8\cos(t)$. 
From some initial conditions in $D_{+}$ there are trajectories that tend to the region $D_{bh}$ despite the presence of perturbations. 
However, there are initial conditions in $D_{+}$ from which the trajectories move into the region $D_{-}$ due to perturbations and stay there. 
If there are no disturbances, then $D_{S}=D_{-}$ and $D_{U}=D_{+}$.

\begin{figure}[h]
\begin{minipage}[h]{0.49\linewidth}
\center{\includegraphics[width=0.9\linewidth]{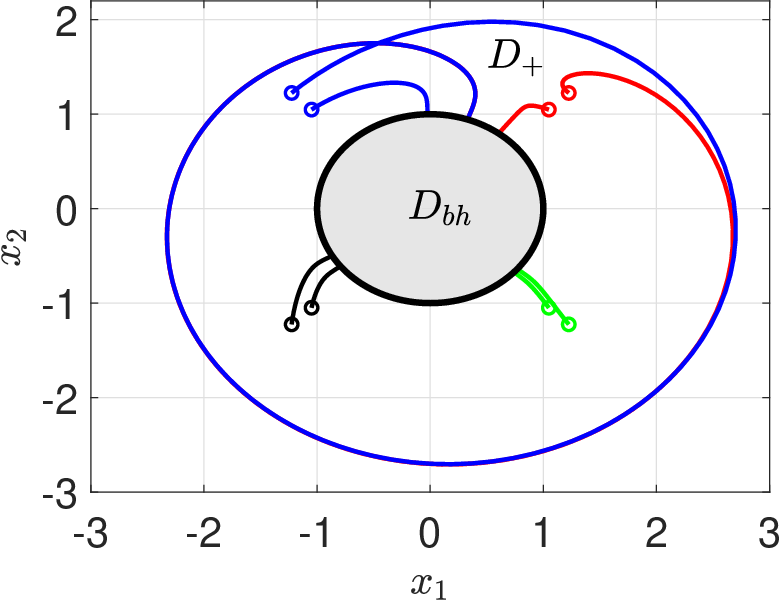}}
\end{minipage}
\hfill
\begin{minipage}[h]{0.49\linewidth}
\center{\includegraphics[width=0.9\linewidth]{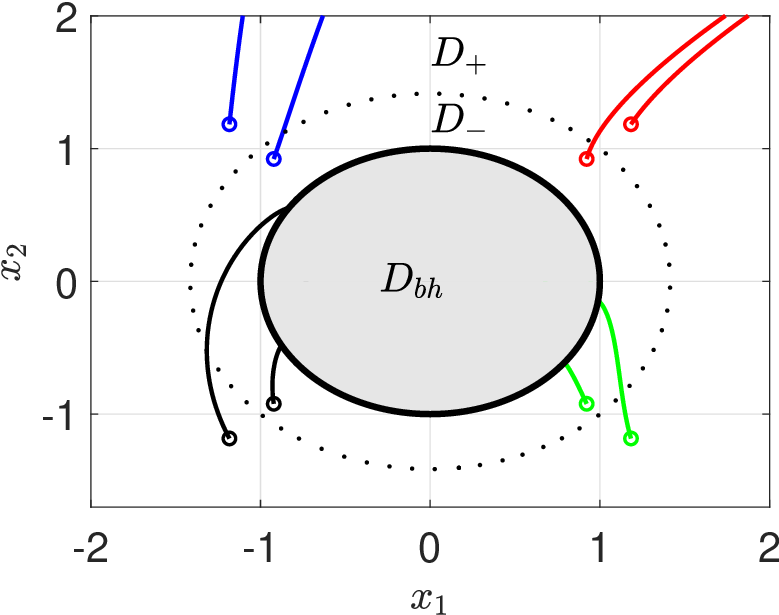}}
\end{minipage}
\caption{The phase portrait of \eqref{eq_ex2_1} for $\rho(x)=e^{(x_1^2+x_2^2-1)^{-0.98}}$ (left) and $\rho(x)=-\ln(x_1^2+x_2^2-1)$ (right).}
\label{Fig_bh}
\end{figure}

\textit{Case 3.3.} 
If $\rho(x)=-e^{(x_1^2+x_2^2-1)^{-0.98}}$, then the density is negative in $D_{-} = \mathbb R^2 \setminus D_{wh}$, where $D_{wh}=\{x \in \mathbb R^2: x_1^2+x_2^2 \leq 1\}$. 
Therefore, despite the presence of disturbances, all trajectories move away from the region $D_{wh}$ (see Fig.~\ref{Fig_wh} on the left) and will never be able to approach the boundary of this region, where the density will increase to infinity. 

\textit{Case 3.4.} 
If $\rho(x)=\ln(x_1^2+x_2^2-1)$, then the density has positive value in $D_{+}=\{x \in \mathbb R^2: x_1^2+x_2^2 \geq \sqrt{2}\}$ and negative value in $D_{-}=\{x \in \mathbb R^2: 1 \leq x_1^2+x_2^2 \leq \sqrt{2}\}$. 
Therefore, the trajectories with initial conditions from $D_{+}$ and $D_{-}$ tend the region of separation of these sets considering disturbances.

\begin{figure}[h]
\begin{minipage}[h]{0.49\linewidth}
\center{\includegraphics[width=0.9\linewidth]{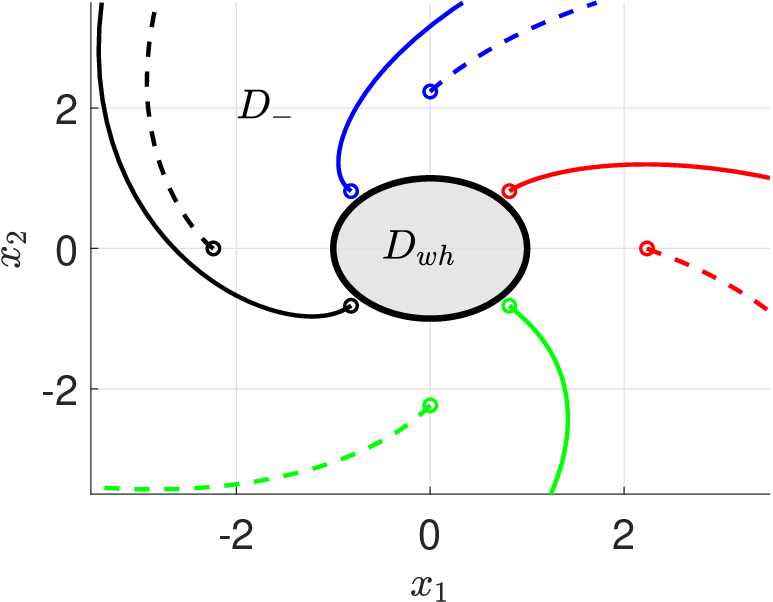}}
\end{minipage}
\hfill
\begin{minipage}[h]{0.49\linewidth}
\center{\includegraphics[width=0.9\linewidth]{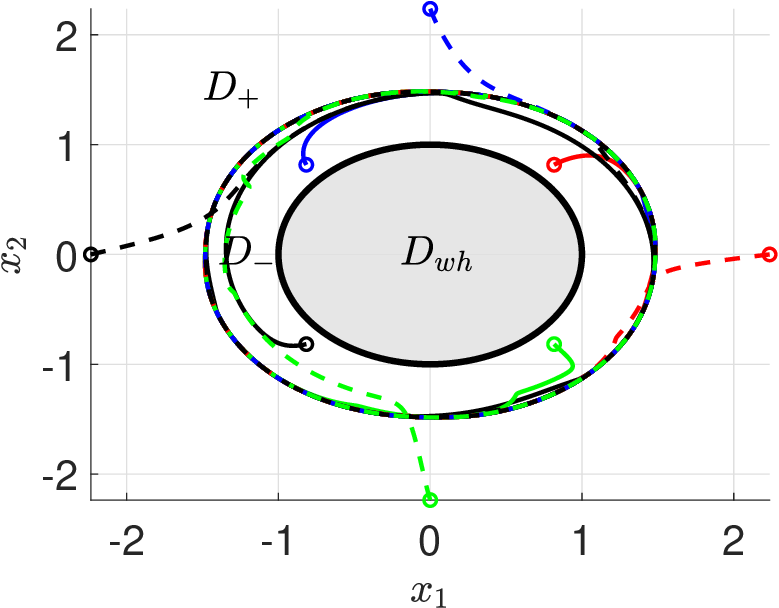}}
\end{minipage}
\caption{Phase portrait of \eqref{eq_ex2_1} for $\rho(x)=-e^{(x_1^2+x_2^2-1)^{-0.98}}$ (left) and
$\rho(x)=\ln(x_1^2+x_2^2-1)$ (right).}
\label{Fig_wh}
\end{figure}


\begin{remark}
\label{physics_mean}
Let us explain the physical meaning of the systems under consideration.
If the density function is explicitly represented on the right hand side of the system equation, for example, in the form $\dot{x}=\rho(x)f(x)$, then the space density value directly affects the phase flow velocity.
Thus, if $\rho(x)=1$, then we have an initial system of the form $\dot{x}=f(x)$.
If $\rho(x)>0$, then the presence of the density function does not qualitatively affect the equilibrium positions and their types, but quantitatively affects the phase portrait.
At $0<\rho(x)<1$ the norm of the phase velocity vector decreases because the space density decreases.
Conversely, when $\rho(x)>1$, the norm of the phase velocity vector increases due to the increase of the space density.
Changing the sign of the density function qualitatively changes the phase portrait.

Condition (b) in Definition \ref{def1} can be interpreted as the rate of change of the phase volume, given by the function $V(x,t)$, taking into account the density of space.

Here are models of real processes:
\begin{itemize}

\item pendulum equation $\dot{x}_1=x_2$, $\dot{x}_2=-\frac{g}{l} \sin x_1 - \frac{k}{m}x_2$, 
where $x_1 $ is an angular deviation of the pendulum from the vertical axis, 
$x_2$ is an angular velocity of the pendulum, 
$g$ is the gravity acceleration, 
$l$ is a pendulum length, 
and $k$ is a friction coefficient \cite{Khalil09}.
Choosing the quadratic function in the form of the total energy as $V=\frac{g}{l} (1-\cos x_1)+0.5x_2^2$, we obtain $\dot{V}=-\frac{k}{m} x_2^2$.
If we set the density function $\rho(x,t)=k$, then we have undamped oscillations in the absence of friction ($\rho(x,t)=0$) and damped oscillations in the presence of friction ($\rho(x,t ) \neq 0$);

\item different types of breeding patterns can be written as $\dot{x}=\rho(x)x$, where $x$ is the size of the biological population \cite{Arnold12}.
For $\rho(x)=k>0$ we have a normal reproduction model, for $\rho(x)=kx$ one has an explosion model, for $\rho(x)=1-x$ we have a logistic curve model;

\item absolutely stable and absolutely unstable regions from Definitions \ref{def4} and \ref{def5} can be found as the simplest models of black and white holes, respectively, see \cite{Carroll04}.
\end{itemize}
\end{remark}

\begin{remark}
According to Remark \ref{physics_mean}, the density systems can be also considered and called as gravitational systems, where $\rho(x,t)$ is a gravitational function. 
Practically, this means that behaviour of systems can be influenced not by the density of space, but by the gravitational field created by a dense body. 
Therefore, the density control in the following sections can be considered as a gravitational control, that depending on the environmental conditions.
However, all this does not affect the mathematical description presented in this paper.
\end{remark}


\section{Density adaptive control}
\label{Sec3}

Consider the system
\begin{equation}
\label{eq1_simple}
\begin{array}{l}
Q(p)y(t)=kR(p)u(t)+d(t),
\end{array}
\end{equation}
where $y \in \mathbb R$ is the output,
$u \in \mathbb R$ is the control, 
$p=d/dt$, 
$Q(\lambda)$ and $R(\lambda)$ are monic polynomials with constant unknown coefficients and orders $n$ and $n-1$ accordingly, 
$R(\lambda)$ is Hurwitz polynomial, 
$\lambda$ is a complex variable, 
$k>$ is a known gain, 
$d \in \mathbb R$ is the disturbance and $\bar{d}=\sup\{|d(t)|\}$. 
Assume that the unknown coefficients of $Q(p)$, $R(p)$, $k$ and initial conditions of \eqref{eq1_simple} belong to a known compact set $\Xi$.

The relative degree of \eqref{eq1_simple} is equal to $1$.
All obtained results can be extended to systems with a relative degree greater than $1$, e.g., by using \cite{Fradkov00,Ioannou12,Tao14,Annaswamy21}.
In this section, we consider systems with a relative degree of $1$ in order to avoid cumbersome derivations to overcame the problem of a high relative degree.

Rewrite the operators $Q(p)$ and $R(p)$ as $Q(p)=Q_m(p)+\Delta Q(p)$ and $R(p)=R_m(p)+\Delta R( p)$,
where $R_m(\lambda)$ is an arbitrary Hurwitz polynomial of order $n-1$. 
The polynomial $Q_m(\lambda)$ of degree $n-1$ is chosen such that $Q_m(\lambda)/R_m(\lambda)=\lambda$. 
The orders of $\Delta Q(p)$ and $\Delta R(p)$ are $n-1$ and $n-2$, respectively.
Extracting the integer part in
$\frac{\Delta Q(\lambda)}{Q_m(\lambda)}=k_{0y}+\frac{\Delta \tilde{Q}(\lambda)}{R_m(\lambda)}$, rewrite \eqref{eq1_simple} as follows
\begin{equation}
\label{eq0_adapt}
\begin{array}{l}
\dot{y}(t)= \left(1+\frac{\Delta R(p)}{R_m(p)} \right)u(t)-\left(\frac{\Delta \tilde{ Q}(p)}{R_m(p)}+k_{0y} \right) y(t)+\hat{d}(t),
\end{array}
\end{equation}
where $\hat{d}(\lambda) = \frac{1}{Q_m(\lambda)}d(\lambda) + \frac{D(\lambda)}{Q_m(\lambda)}$, $D(\lambda)$ is a polynomial depending on the initial conditions of \eqref{eq1_simple} and obtained by Laplace transform. 
Since $Q_m(\lambda)$ is Hurwitz polynomial and $d(t)$ is bounded, then $\hat{d}(t)$ is bounded.

Let $c_0=col\{c_{0y},c_{0u},k_{0y}\}$ be the vector of unknown parameters,
$\Delta \tilde{Q}(p)=c_{0y}^{\rm T}[1~ p~ ... ~p^{n-2}]$ and $\Delta R(p)= c_{0u}^{\rm T}[1~ p~ ... ~p^{n-2}]$.
Also, consider the vector $w=col\{V_y,V_u,y\}$ composed of the following filters
\begin{equation}
\label{eq1_adapt_filt}
\begin{array}{l}
\dot{V}_y(t)=FV_y(t)+by(t),
~~~
\dot{V}_u(t)=FV_u(t)+bu(t).
\end{array}
\end{equation}
Here $F$ is Frobenius matrix with characteristic polynomial $R_m(\lambda)$ and
$b=col\{0,...,0,1\}$.
Thus, the equation \eqref{eq0_adapt} can be rewritten as
\begin{equation}
\label{eq1_adapt}
\begin{array}{l}
\dot{y}(t)=u(t)-c_0^{\rm T}w(t)+\hat{d}(t).
\end{array}
\end{equation}

Introduce the control law
\begin{equation}
\label{eq2_adapt}
\begin{array}{l}
u(t) = c^{\rm T}(t)w(t) + \tau \rho(y,t).
\end{array}
\end{equation}

Substituting \eqref{eq2_adapt} into \eqref{eq1_adapt}, we get the closed-loop system:
\begin{equation}
\label{eq3_adapt}
\begin{array}{l}
\dot{y}(t)=\tau \rho(y,t)+(c(t)-c_0)^{\rm T}w(t)+\hat{d}(t).
\end{array}
\end{equation}


\begin{theorem}
\label{th_adapt} 
Let $\rho(y,t)$ be the function such that $\rho(y,t)y<0$ when $y>g$ and $\rho(y,t)y>0$ when $y<g$, where $g(t)$ is a bounded function. 
Then the control law \eqref{eq2_adapt} together with the adaptation algorithm
\begin{equation}
\label{eq6_adapt}
\begin{array}{l}
\dot{c} = -\beta y w - \gamma c |y|  \textup{sign}(\rho(y,t)y),
\end{array}
\end{equation}
where $\beta, \gamma >0$, leads the system \eqref{eq1_adapt} to the density type and 
$\limsup\limits_{t \to \infty}|y(t)-g(t)|<\delta$.
Also, all signals are bounded in the closed-loop system.
\end{theorem}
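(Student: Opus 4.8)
The plan is to run an augmented Lyapunov argument on the closed-loop equations \eqref{eq3_adapt} and \eqref{eq6_adapt}. First I would take the candidate
\[
V \;=\; \tfrac12\,y^2 \;+\; \tfrac{1}{2\beta}\,(c-c_0)^{\rm T}(c-c_0),
\]
which is positive definite in the variables $(y,\,c-c_0)$. Differentiating along \eqref{eq3_adapt} and inserting the adaptation law \eqref{eq6_adapt}, the certainty-equivalence structure of \eqref{eq2_adapt} makes the sign-indefinite cross term cancel: the contribution $y\,(c-c_0)^{\rm T}w$ coming from $\dot y$ is exactly annihilated by the $-\beta y w$ part of $\dot c$, and one is left with
\[
\dot V \;=\; \tau\,\rho(y,t)\,y \;+\; \hat d(t)\,y \;-\; \tfrac{\gamma}{\beta}\,(c-c_0)^{\rm T}c\,|y|\,\textup{sign}\!\big(\rho(y,t)y\big).
\]
Since the derivative of the positive-definite $V$ along the closed loop is now expressed through $\rho(y,t)$, the controlled system \eqref{eq1_adapt} is of density type in the sense of Definition \ref{def1}; identifying the stable set $D_S=\{\,y>g(t)\,\}$, on which $\rho(y,t)y<0$, and the unstable set $D_U=\{\,y<g(t)\,\}$, on which $\rho(y,t)y>0$, gives the first claim of the theorem and, via Proposition \ref{Th1}, the expectation that trajectories are drawn toward the separating surface $y=g(t)$.

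Next I would establish boundedness of $(y,c)$ and the ultimate bound. On $D_S$ the term $\tau\rho(y,t)y$ is strictly negative, and since $g$ is bounded (say $|g(t)|\le\bar g$) the aggregate $\tau\rho(y,t)y+\hat d(t)\,y$ becomes negative once $|y-g|$ exceeds a threshold depending on $\tau$, $\bar d=\sup|\hat d|$ and the growth of $\rho$; the leakage term is controlled with $(c-c_0)^{\rm T}c\ge\tfrac12|c-c_0|^2-\tfrac12|c_0|^2$ together with $c_0\in\Xi$ compact. On $D_U$, where $\tau\rho(y,t)y>0$, I would instead argue directly on $\dot y=\tau\rho(y,t)+(c-c_0)^{\rm T}w+\hat d$, noting that for $|y-g|$ large $\dot y$ points back toward $g(t)$, so no trajectory can escape through $D_U$ either, while the leakage with $\textup{sign}(\rho(y,t)y)=+1$ there damps $c$. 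Gluing the two estimates shows that a sufficiently large sublevel set of $V$ intersected with a neighbourhood of $\{y\approx g\}$ is forward invariant, hence $y$ and $c$ are uniformly bounded; then, because $R_m$ is Hurwitz and the filters \eqref{eq1_adapt_filt} are input-to-state stable, $w$ is bounded, and so is $u$ from \eqref{eq2_adapt} — here one also uses that $|\rho|$ grows near the boundary of the admissible region, so that boundary acts as an impassable wall in the sense of Definitions \ref{def4}--\ref{def5} and the trajectory stays away from any singularity of $\rho$. With boundedness secured, restricting to $\{|y-g|\ge\delta\}$ and using $\dot V<0$ there (equivalently, the attraction part of Proposition \ref{Th1}) yields $\limsup_{t\to\infty}|y(t)-g(t)|<\delta$, with $\delta$ the radius at which the dominance $\tau|\rho(y,t)y|>\bar d|y|+\tfrac{\gamma}{\beta}|(c-c_0)^{\rm T}c|\,|y|$ first fails, uniformly in $t$.

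The step I expect to be the main obstacle is precisely this dominance/boundedness estimate. Because $\rho$ is sign-indefinite, is not multiplied by the whole right-hand side of \eqref{eq1_adapt}, and carries no a priori lower bound on $|\rho|$, while the leakage carries the factor $\textup{sign}(\rho(y,t)y)$ whose sign flips across $y=g$, there is no single global inequality $\dot V\le\rho(y,t)W_1(y)\le 0$; the analysis must be localized to $D_S$ and $D_U$ separately and then glued, carefully tracking the dependence of the constants on $\bar d$, $\bar g$, $\beta$, $\gamma$, $\tau$ and on the behaviour of $\rho$ near $y=g$. A secondary difficulty is that $g$ is only assumed bounded, not differentiable, so the ultimate bound must be read off from $\tfrac12 y^2$ rather than from $\tfrac12(y-g)^2$, which is exactly why the conclusion is a neighbourhood radius $\delta$ and not exact convergence. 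The remaining ingredients — the cross-term cancellation, the filter ISS estimates, and the final $\limsup$ — are routine once these two points are handled.
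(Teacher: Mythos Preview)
Your overall architecture matches the paper's: the same Lyapunov candidate $V=\tfrac12 y^2+\tfrac{1}{2\beta}(c-c_0)^{\rm T}(c-c_0)$, the same cross-term cancellation, the same leakage bound via $(c-c_0)^{\rm T}c\ge\tfrac12|c-c_0|^2-\tfrac12|c_0|^2$, and the same appeal to Proposition~\ref{Th1} once $D_S$ and $D_U$ are identified. The paper also splits into $y>g$ and $y<g$, but on \emph{both} sides it works with $\dot V$ (upper estimate on $D_S$, lower estimate on $D_U$), restricting each set by the extra condition $|\rho(y,t)|>\eta$ with $\eta=\frac{1}{(1-\mu)\tau}\big(\bar d+\tfrac{\gamma}{2\beta}\bar c_0^2\big)$; your proposal to argue on $\dot y$ in $D_U$ is not what the paper does and is harder to close because $(c-c_0)^{\rm T}w$ is not yet controlled there.

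The genuine gap in your plan is the boundedness of $V_u$ (hence of $w$ and $u$). You write ``because $R_m$ is Hurwitz and the filters \eqref{eq1_adapt_filt} are input-to-state stable, $w$ is bounded'', but the input to the $V_u$-filter is $u=c^{\rm T}w+\tau\rho$, which itself depends on $V_u$ through $w$; invoking ISS here is circular. The paper breaks this loop by substituting $u=c_0^{\rm T}w+(c-c_0)^{\rm T}w+\tau\rho$ into $\dot V_u=FV_u+bu$ and absorbing the $c_{0u}$-component into the state matrix, obtaining $\dot V_u=(F+bc_{0u}^{\rm T})V_u+b[\,c_{0y}^{\rm T}V_y+k_{0y}y+(c-c_0)^{\rm T}w+\tau\rho\,]$. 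The point is that $F+bc_{0u}^{\rm T}$ has characteristic polynomial $R(\lambda)$, the \emph{plant} numerator, which is Hurwitz by assumption; together with the already-established boundedness of $y$, $V_y$, $\rho(y,t)$ and (from \eqref{eq3_adapt}) of $(c-c_0)^{\rm T}w$, this yields $V_u$ bounded. Without this structural step your filter argument does not close, and the subsequent boundedness of $c$ and $u$ is left unsupported.
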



\begin{proof}
To analyse the stability of the closed-loop system, introduce Lyapunov function in the form
\begin{equation}
\label{eq4_adapt}
\begin{array}{l}
V=\frac{1}{2}y^2 + \frac{1}{2\alpha}(c-c_0)^{\rm T}(c-c_0).
\end{array}
\end{equation}

Consider the relation
$(c-c_0)^{\rm T}c=-0.5(c-c_0)^{\rm T}(c-c_0)-0.5c^{\rm T}c+0.5c_0^{\rm T}c_0$. 
Find the time derivative of \eqref{eq4_adapt} along the trajectories \eqref{eq3_adapt}, \eqref{eq6_adapt} and rewrite the result as follows
\begin{equation}
\label{eq7_adapt}
\begin{array}{l}
\dot{V}=\tau \mu \rho(y,t) y + \tau (1-\mu) \rho(y,t) y + y \hat{d} 
\\
+ \frac{\gamma}{2\alpha} [-(c-c_0)^{\rm T}(c-c_0)-c^{\rm T}c+c_0^{\rm T}c_0 ] |y| \textup{sign}(\rho y).
\end{array}
\end{equation}

\textit{Case (i).} 
Let $y>g$. Consider the upper estimate of \eqref{eq7_adapt} as follows
\begin{equation}
\label{eq7_adapt_case_i1}
\begin{array}{l}
\dot{V} \leq \tau \mu \rho(y,t) y - |y| \left ((1-\mu)\tau |\rho(y,t)| - \bar{d}- \frac{\gamma }{2\beta} \bar{c}_0^2  \right).
\end{array}
\end{equation}
where $\bar{c}_0$ is the largest value of $|c_0|$ from the compact set $\Xi$.
Denoting by
\begin{equation}
\label{eq7_adapt_case_note}
\begin{array}{l}
\eta=\frac{1}{(1-\mu)\tau} \left( \bar{d} + \frac{\gamma}{2\beta} \bar{c}_0^2 \right),
\end{array}
\end{equation}
we get $\dot{V} \leq \tau \mu \rho(y,t) y < 0$ in the domain 
$D_S=\left\{y \in \mathbb R, t \geq 0: y>g ~\mbox{and}~ |\rho| > \eta \right\}$. 
Thus, the closed-loop system is stable in $D_S$.

\textit{Case (ii).} 
Let $y<g$. Consider the lower estimate of \eqref{eq7_adapt} in the form
\begin{equation}
\label{eq7_adapt_case_i1}
\begin{array}{l}
\dot{V} \geq \tau \mu \rho(y,t) y + |y| \left ((1-\mu)\tau |\rho(y,t)| - \bar{d}- \frac{\gamma}{2\beta} \bar{c}_0^2 \right).
\end{array}
\end{equation}
One gets $\dot{V} \geq \tau \mu \rho(y,t) y > 0$ in the domain 
$D_U=\left\{y \in \mathbb R: y < g ~\mbox{and}~ |\rho(y,t)| > \eta \right\}$. 
Therefore, the closed-loop system is unstable in $D_{U}$.

According to Proposition \ref{Th1}, we have $\limsup\limits_{t \to \infty} |y(t)-g(t)| < \delta$, where $\delta>0$.
The expression \eqref{eq3_adapt} implies that $\limsup\limits_{t \to \infty} \left|(c(t)-c_0)^{\rm T} w(t) \right| < \infty$.
The boundedness of $V_y(t)$ follows from the first equation \eqref{eq1_adapt_filt}, the boundedness of $y(t)$, and the Hurwitz matrix $F$.
Substituting \eqref{eq2_adapt} into the second equation \eqref{eq1_adapt_filt}, we get
\begin{equation}
\label{eq8_adapt_filt_proof}
\begin{array}{lll}
\dot{V}_u & =FV_u+bc_0^{\rm T}w+b(c-c_0)^{\rm T}w+ b \tau \rho (y,t) 
\\
&=(F+bc_{0u})V_u+b[c_{0y}^{\rm T}V_y+k_{0y}y
\\
&+(c-c_0)^{\rm T}w+ \tau \rho(y,t)].
\end{array}
\end{equation}
The matrix $F+bc_{0u}$ has Hurwitz characteristic polynomial $R(\lambda)$.
Therefore, for the bounded terms in square brackets, $V_u(t)$ is ultimately bounded.
Then $w(t)$ is also ultimately bounded.
From the condition $\limsup\limits_{t \to \infty} |y(t)-g(t)| < \delta$, the boundedness of $g(t)$ and the ultimate boundedness of $w(t)$ it follows from \eqref{eq6_adapt} that $\dot{c}(t)$ and $c(t)$ are ultimately bounded.
Then \eqref{eq2_adapt} implies boundedness of the control law.
As a result, all signals in the closed-loop system are bounded.
\end{proof}


\begin{remark}
\label{rem_adapt} 
If $\rho(x,t)$ is chosen such that $|\rho(x,t)| > \eta$ for any $x$ and $t$ in Theorem \ref{th_adapt}, 
then the control law \eqref{eq2_adapt} with the adaptation algorithm \eqref{eq6_adapt} 
ensures $\lim\limits_{t \to \infty}(y(t)-g(t)) = 0$. 
This follows from the proof of Theorem \ref{th_adapt} and Proposition \ref{Th1}.
\end{remark}


\textit{Example 3}.
Consider the system \eqref{eq1_simple} with unknown coefficients of 
$Q(p)=(p-1)^3$, $R(p)=(p+1)^2$, and $k=1$, as well as unknown initial conditions $p^2y(0)=0$ , $py(0)=0$, $y(0)=4$.

Choose 
$F=\begin{bmatrix}
0 & 1\\
-1 & -2
\end{bmatrix}$
in \eqref{eq1_adapt_filt}.
In the adaptation algorithm \eqref{eq6_adapt} set $\beta=0.1$ and $\gamma=0.01$.
Below we consider different types of the density function $\rho(y,t)$ in \eqref{eq2_adapt}.


1) For $\rho(y,t)= - \alpha y$, $\alpha>0$ the closed-loop system \eqref{eq3_adapt} has the equilibrium point $y=0$ under $d(t)=0$. 
The density function $\rho(y,t)$ is negative in $D_{-}=\mathbb R_{+}$ and positive in $D_{+}=\mathbb R_{-}$ for any $t \geq 0$. 
Substituting $\rho(y,t)$ into \eqref{eq7_adapt_case_i1}, one gets $\dot{V} < 0$ in 
$D_S=\left\{y \in \mathbb R: |y| > \frac{\eta}{\alpha}\right\}$, where $\eta$ is given by \eqref{eq7_adapt_case_note}.
We have obtained the problem of adaptive stabilization, which is described in detail in \cite{Fradkov00,Ioannou12,Tao14,Annaswamy21,Bobtsov15,Bobtsov22}.
Fig.~\ref{Fig_ad_1} (left) (see only the trajectory entering the grey area) shows the transients for $\alpha=4$ and $d(t)=5+10\sin(7t)$.


2) For $\rho(y,t)= \alpha \ln\frac{b-y}{b+y}$ with continuous function $b(t)>0$ the closed-loop system \eqref{eq3_adapt} has the equilibrium $y=0$ under $d(t)=0$.
The density function is negative in $D_{-}=\{y \in \mathbb R_{+}: 0<y<b \}$ and positive in $D_{+}=\{y \in \mathbb R_{-}: -b<y<0\}$. 
Substituting $\rho(y,t)$ into \eqref{eq7_adapt_case_i1}, we have $\dot{V}<0$ in $D_S = \left\{ y \in \mathbb R: \frac{b(e^\frac{\eta}{\alpha}-1)}{1+e^\frac{\eta}{\alpha}}<|y|<b \right\}$.
Moreover, $\rho(y,t) \to -\infty$ for $|y-b| \to 0$ in $D_-$ and $\rho(y,t) \to +\infty$ for $|y-b| \to 0$ in $D_+$.
We have obtained a stabilization problem with symmetric constraints $-b$ and $b$.
Fig.~\ref{Fig_ad_1} (left) shows the transients (trajectory inside the dotted tube) for $b(t)=\begin{cases}
    1 & t \leq 0.5,\\
    -9t+9.5 & 0.5 < t \leq 1, \\
    0.5, & t >1,
  \end{cases}$ as well as the same $\alpha=4$ and $d(t)=5+10\sin(7t)$ as in the previous case.
In contrast to the classical control scheme (the trajectory corresponding to $\rho(y,t)= - \alpha y$),
using the density function $\rho(y,t)= \alpha \ln\frac{b-y}{b+y}$ guarantees that the transients are in the tube at all times.

\begin{figure}[h]
\begin{minipage}[h]{0.49\linewidth}
\center{\includegraphics[width=0.9\linewidth]{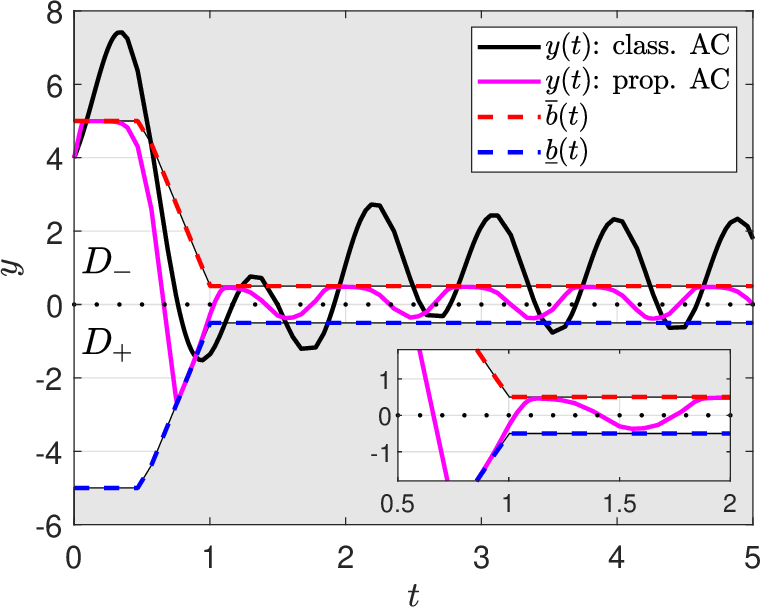}}
\end{minipage}
\hfill
\begin{minipage}[h]{0.49\linewidth}
\center{\includegraphics[width=1\linewidth]{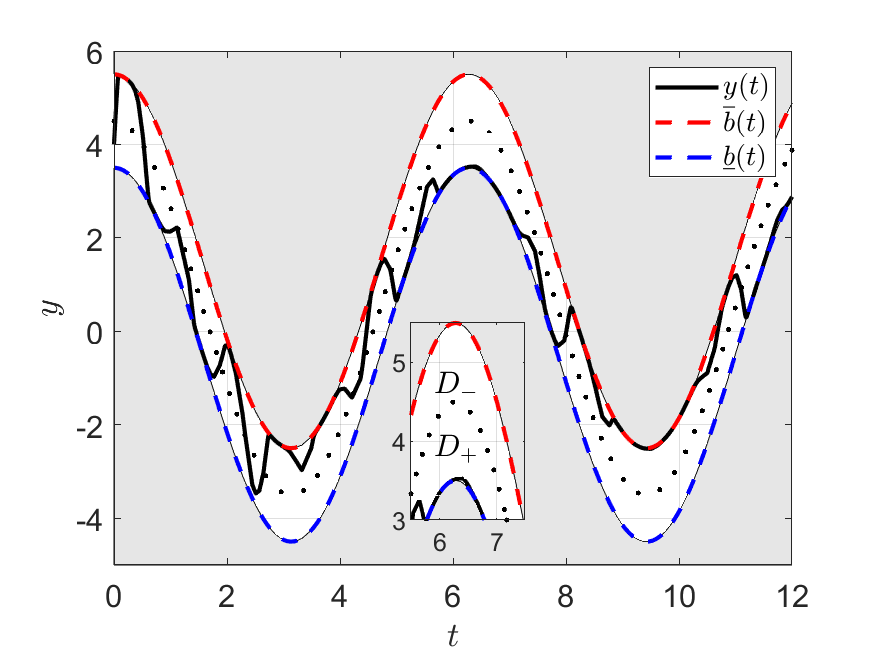}}
\end{minipage}
\caption{In the left figure, the transients in the adaptive control scheme with density functions $\rho(y,t)= - \alpha y$ (the curve crossing a grey area) and $\rho(y,t)= \alpha \ln\frac{b-y}{b+y}$ (the curve inside a tube with dashed borders). In the right figure, the transients in the adaptive control scheme with density function $\rho(y,t)= \alpha \ln\frac{\overline{b}-y}{y-\underline{b}}$.}
\label{Fig_ad_1}
\end{figure}



3) For $\rho(y,t)= \alpha \ln\frac{\overline{b}-y}{y-\underline{b}}$ with continuous functions $\overline{b}>\underline{b}$ one has $\rho(y,t)=0$ in $y =z=0.5(\overline{b}+\underline{b})$ and any $t$.
The density function $\rho(y,t)$ is negative in
$D_{-}=\left\{y \in \mathbb R: z <y < \overline{b} \right\}$ and
positive in
$D_{+}=\left\{y \in \mathbb R: \underline{b} < y < z \right\}$.
Substituting $\rho(y,t)$ into \eqref{eq7_adapt}, we have $\dot{V}<0$ in
$D_S=\left\{y \in \mathbb R_{+}: \frac{\underline{b}e^\frac{\eta}{\alpha}+\overline{b}}{1+e^\frac{\eta}{\alpha}} < y < \overline{b} \right\} \cup
\left\{y \in \mathbb R_{-}: \underline{b}<y<\frac{\underline{b}e^\frac{\eta}{\alpha}+\overline{b}}{1+e^\frac{\eta}{\alpha}}\right\}$ 
and
$\dot{V} >0$ in
$D_U=\left\{y \in \mathbb R_{+}: \underline{b} < y < \frac{\underline{b}e^\frac{\eta}{\alpha}+\overline{b}}{1+e^\frac{\eta}{\alpha}} \right\} \cup 
\left\{y \in \mathbb R_{-}: \frac{\underline{b}e^\frac{\eta}{\alpha}+\overline{b}}{1+e^\frac{\eta}{\alpha}} < y < \overline{b} \right\}$.
Moreover, $\rho(y,t) \to -\infty$ at $y \to \overline{b}-0$ and $\rho(y,t) \to +\infty$ at $y \to \underline{b}+0$.
We have obtained a stabilization problem with asymmetric constraints $\overline{b}$ and $\underline{b}$.
Fig.~\ref{Fig_ad_1} (right) shows the transients for $\alpha=4$, $\overline{b}=4\cos(t)+1.5$, $\underline{b}=4\cos(t)-0.5$, and $d(t)=-2+10\sin(7t)$.



4) For $\rho(y,t)= - \alpha (y-z)$ with a piecewise continuous function $z$ we have $\rho(y,t)=0$ in $y=z$ for any $t$. 
The density function $\rho(y,t)$ is negative in
$D_{-}=\{y \in \mathbb R : y >z \}$ and
positive in
$D_{+}=\{y \in \mathbb R: y <z\}$.
Substituting $\rho(y,t)$ into \eqref{eq7_adapt}, we have $\dot{V}<0$ in 
$D_S=\left\{y \in \mathbb R_{+}: y >z+\frac{\eta}{\alpha}\right\} \cup \left\{y \in \mathbb R_{-}: y >z-\frac{\eta}{\alpha}\right\}$
and $\dot{V}>0$ in 
$D_U=\left\{y \in \mathbb R_{+}: y <z-\frac{\eta}{\alpha}\right\} \cup \left\{y \in \mathbb R_{-}: y <z+\frac{\eta}{\alpha}\right\}$.
We got the problem of tracking $y$ for $z$.
Fig.~\ref{Fig_ad_4} (left) shows the transients for $\alpha=100$, $z=1+\sin(t)+\textup{atan}[10^3\sin(1.3t)]$,
and $d(t)=5+10\sin(3t)$.

\begin{figure}[h]
\begin{minipage}[h]{0.49\linewidth}
\center{\includegraphics[width=0.9\linewidth]{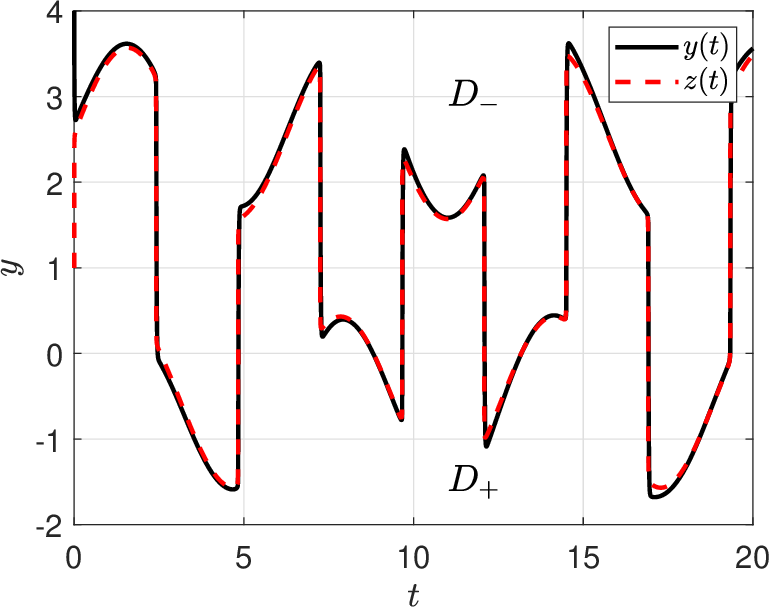}}
\end{minipage}
\hfill
\begin{minipage}[h]{0.49\linewidth}
\center{\includegraphics[width=1\linewidth]{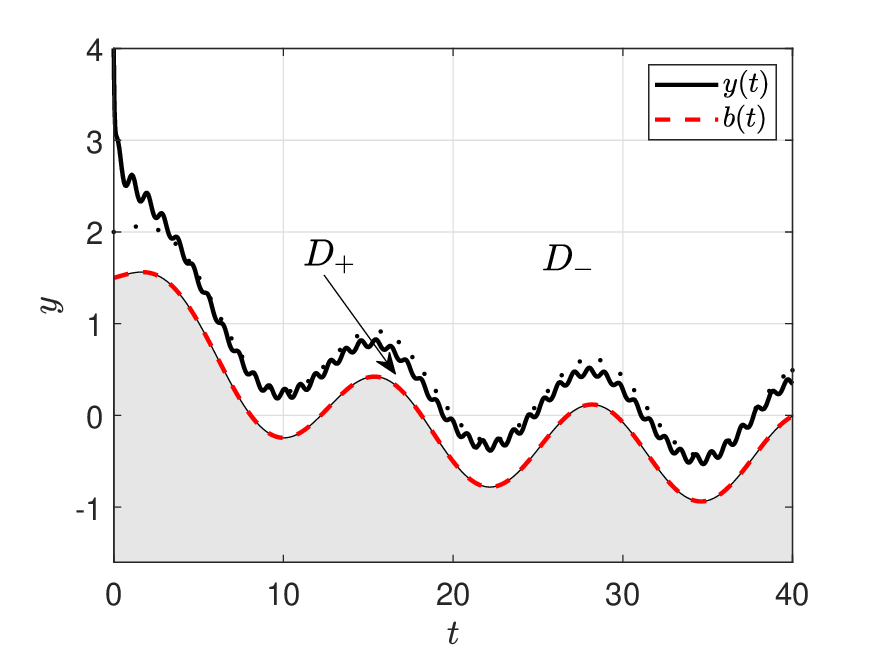}}
\end{minipage}
\caption{In the left figure, the transients in in the adaptive control scheme with density function $\rho(y,t)= - \alpha (y-z)$. In the right figure, the transients in the adaptive control scheme with density function $\rho(y,t)= - \alpha \ln(y-b)$.}
\label{Fig_ad_4}
\end{figure}



5) For $\rho(y,t)= - \alpha \ln(y-b)$ with a continuous function $b$ one has $\rho(y,t)=0$ in $y=z=b+1$ and any $t$.
The density function $\rho(y,t)$ is negative in
$D_{-}=\{y \in \mathbb R: y > z\}$ and positive in $D_{+}=\{y \in \mathbb R: b<y < z\}$ .
Substituting $\rho(y,t)$ into \eqref{eq7_adapt}, we have $\dot{V}<0$ in 
$D_S=\{y \in \mathbb R_{+}: y > b+e^{\frac{\eta}{\alpha}}\} \cup \{y \in \mathbb R_{-}: b < y < b-e^{\frac{\eta}{\alpha}}\}$ and 
$\dot{V} >0$ in 
$D_U=\{y \in \mathbb R_{+}: b < y < b-e^{\frac{\eta}{\alpha}}\} \cup \{y \in \mathbb R_{-}: y > b+e^{\frac{\eta}{\alpha}}\}$.
Also, $\rho(y,t) \to -\infty$ for $y \to b+0$.
This gives us the problem of sliding along the surface $b$.
Fig.~\ref{Fig_ad_4} (right) shows the transients for $\alpha=10$, $b=2e^{-0.1t}+0.5\sin(0.5t)-0.5$, and $d(t)=5+5\sin(7t)$.



\section{Conclusion}
\label{Sec6}

The paper considers density systems depending on the density function on the right hand side of the equation. 
The density function specifies the properties of the space and influences the behaviour of the considered systems.
This property is also used in the design of control laws.
Depending on the type of the density function, it is possible to obtain both classical and new control laws, which ensure a  new set of goals.
In particular, the adaptive control law is given with a guarantee of transients in certain domains, while the classical adaptive control scheme provides only the practical ultimate boundedness.
In this case, the parameters of the given domains are set using the density function by setting the density of the space.
The simulation results confirm the theoretical conclusions.

\section*{Acknowledgment}
The work was carried out at the IPME RAS 
with the support of goszadanie no. FFNF-2024-0008 
(no. 124041100006-1 in EGISU NIOKTR).

\end{document}